\begin{document}

\title{Lie symmetry analysis and explicit solutions for the time fractional generalized Burgers-Fisher Equation
}


\author{Ramya Selvaraj$^1$        \and
       V.Swaminathan$^1$  \and A.Durga Devi$^2$  \and K.Krishnakumar$^1$  
}


\institute{$^1$ 
             Department of Mathematics,
             Srinivasa Ramanujan Centre, 
              SASTRA Deemed to be University,
              Kumbakonam 612 001, India.
             \email{rsramyaselvaraj@gmail.com, mvsnew@gmail.com}             \\
                $^2$ 
             Department of Physics,
             Srinivasa Ramanujan Centre, 
              SASTRA Deemed to be University,
              Kumbakonam 612 001, India.
}
\date{Received: date / Accepted: date}

\maketitle

\begin{abstract}
In this article,  we study the Lie point symmetries for the time fractional generalized Burgers-Fisher (GBF) equation. While getting an appropriate combination of symmetries, the time fractional partial differential equation  has been transformed to nonlinear fractional ordinary differential equations (ODE) using Erdelyi-Kober differential operator. Furthermore, using power series method, we get the exact solution of the nonlinear fractional GBF equation with the arbitrary nonlinearity.

\keywords{Generalized Burgers-Fisher equation \and RL fractional derivative  \and Lie symmetry   \and Power series}
\end{abstract}
\section{Introduction}
\hspace{\parindent} The fractional differential equation (FDE) plays a vital role in many branches of science and engineering \cite{kilbas,juma,eslami,podlubny,huan}. FDE has many applications in the field of magnetism, fluid mechanics, cardiac tissue electrode interface, ultrasonic wave propagation in human cancellous bone, RLC electric circuit, theory of visco elasticity, lateral and longitudinal control of autonomous vehicles, wave propagation in viscoelastic horn, heat transfer, sound waves propagation in rigid porous materials and many more.
    
    In recent years, many authors find solutions of FDEs using various methods such as variational iteration method \cite{wu},  homotopy perturbation method \cite{Gepreelhomotopy}, Adomian decomposition method \cite{elsayed}, the first integral method \cite{bekirunsal,Lu},  the sub-equation method \cite{Guo,tong,zhangzhang} and so on.
    
    Although a large number of methods have been developed to solve FDEs \cite{Alzaidy,bulutbaskonus,bulutbernoulli,bulutpandir,tarasov,veeresha}, in recent years, to reach the exact solutions of nonlinear partial differential equations \cite{kirya}, Lie symmetry method is used which is considered as an efficient method. In the beginning of 19th century, the Norwegian mathematician Sophus Lie \cite{lie} introduced the Lie symmetry analysis.
    
    To find one or several parameter continuous transformations leaving the equation invariant, the Lie symmetry analysis is used. Later, it has been developed by Ovsiannikov \cite{ovsiannikov} and many researchers used it to solve various equations  \cite{bal18,bal17,juma,bluman,hydon,jeff,kkr,cradd,mustafa,ibragi,oldham,kkr2,olver,kkr1}. 
    
    The time fractional KdV equations using Lie group analysis was explained by Wang and Xu \cite{wang}. He also explained the invariant analysis and explicit solutions of the time fractional nonlinear perturbed Burgers equation \cite{wang2} and Y. W. Zhang \cite{zhangyw} performed Lie symmetry analysis to the time fractional generalized fifth order KdV equation. 
    
    V. Kumar et al \cite{kumar} studied Lie symmetry based analytical and numerical approach for modified Burgers-KdV equation. Furthermore, T. Bakkyaraj and R. Sahadevan \cite{baakkiyaraj,saha} determined the invariant analysis of nonlinear fractional ordinary differential equations with Riemann-Liouville fractional derivative. They also performed group formalism of Lie transformations to time fractional differential equations \cite{baksaha}.

This  work is organized as follows: In section 2, we give some preliminaries on fractional derivatives. In section 3 and 4, we study the  Lie point symmetries and symmetry reduction of the GBF equation. In section 5 and 6, we  derive the  power series solution to find the  explicit  solutions of the resultant equation and analyze convergence. Physical meaning of the exact solution of the power series is given in section 7. Section 8 ends with conclusion.

\section{Preliminaries}
    Many researchers  used several definitions of fractional derivative such as  the Caputo \cite{kilbas,podlubny},  Riemann-Liouville \cite{kilbas,oldham,podlubny,samko}, the Weyl \cite{kilbas,samko}, the Grunwald-Letnikov \cite{kilbas,oldham,podlubny,samko} and the Riesz \cite{samko}. Among them Caputo and Riemann-Liouville fractional derivatives have been widely used. 

The Riemann-Liouville (RL) fractional  derivative is used to study the Lie symmetry analysis of fractional differential equations \cite{kilbas,hydon,ibragi,podlubny,baakkiyaraj,saha}.  In this paper, we use some basic definitions.
\subsection{Definition}
\label{sec:1}
The RL fractional derivative \cite{gazizov,huang,baakkiyaraj} is given by
\begin{equation}
D^{\alpha}f(t)= \left\{\begin{array}{ll}
\displaystyle{\frac{d^nf}{dt^n}} & \alpha = n,\\
 \displaystyle{\frac{d^n}{dt^n}I^{n-\alpha}f(t)}, & 0 \leq n-1 < \alpha < n, \end{array} \right.
\end{equation}
given by $n\in \mathbb{N}$, $I^\mu$ f(t) is the RL fractional integral of order $\mu$, where
\begin{center}
\begin{eqnarray}
I^\mu f(t)&=&\frac{1}{\Gamma(\mu)}\int_0^t(t-s)^{\mu-1}f(s)ds, \mu>0 \nonumber \\
\end{eqnarray}
\end{center}
and $\Gamma(z)$ is the gamma function.

\subsection{Definition}\label{sec:2}
The RL  partial fractional derivative is given by
\begin{equation} \label{RLPDE}
\partial^\alpha_t =\left\{\begin{array}{ll}
\frac{\partial^nf}{\partial t^n},& \alpha = n, \\
\frac{1}{\Gamma (n-\alpha)}\frac{\partial^n}{\partial t^n}\int_0^t(t-s)^{n-\alpha-1} u(s,x)ds, & 0 \leq n-1 < \alpha < n,n\in N. \end{array} \right.
\end{equation}
If it exists, where $\partial^\alpha_t$ is the partial derivative of integer $n$ \cite{gazizov,huang,baakkiyaraj}.

\section{Method of Lie symmetry for the time fractional differential equations}
Let us consider the time fractional partial differential equation (TF-PDE) having the form 
\begin{equation} \label{TFPDE}
\partial ^\alpha_t u=F(t,x,u,u_x,u_{xx},\ldots ),\; (0<\alpha<1).
\end{equation}
A one-parameter Lie group of transformations are given by
\begin{eqnarray}\label{oneparameter}
\tilde{t}&=&t+\epsilon \zeta(t,x,u)+O(\epsilon^2),\nonumber\\  
\tilde{x}&=&x+\epsilon \xi(t,x,u)+O(\epsilon^2),\nonumber\\ 
\tilde{u}&=&u+\epsilon \eta(t,x,u)+O(\epsilon^2),\nonumber\\ 
\frac{\partial^\alpha \tilde{u}}{\partial \tilde{t}^\alpha}&=&\frac{\partial^\alpha u}{\partial t^\alpha} +\epsilon \eta^0_\alpha (t,x,u) + O(\epsilon^2),\\
\frac{\partial \tilde{u}}{\partial \tilde{x}}&=&\frac{\partial u}{\partial x}+\epsilon \eta^x (t,x,u)+O(\epsilon^2), \nonumber\\ 
\frac{\partial^2 \tilde{u}}{\partial \tilde{x^2}}&=&\frac{\partial^2 u}{\partial x^2}+\epsilon \eta^{xx}(t,x,u)+O(\epsilon^2), \nonumber  
\end{eqnarray}
where
\begin{eqnarray}\label{etavalues}
\eta^x&=&D_x (\eta)-u_x D_x (\xi)-u_t D_x (\zeta), \nonumber\\
\eta^{xx}&=&D_x (\eta^x)-u_{xt} D_x(\xi)-u_{xx}D_t(\zeta),\\ \nonumber
\end{eqnarray}
and the total differential operator $D_x$ is defined by
\begin{equation}
D_x=\frac{\partial}{\partial x}+u_x \frac{\partial}{\partial u}+u_{xx} \frac{\partial}{\partial u_x}+...\\ 
\end{equation}
The associated Lie algebra of symmetries is spanned by vector fields 
\begin{equation}\label{totop}
X= \xi \frac{\partial}{\partial x}+\zeta \frac{\partial}{\partial t}+\eta \frac{\partial}{\partial u}.
\end{equation}
The vector field Eq.(\ref{totop}) is a Lie point symmetry of Eq.(\ref{RLPDE}) provided
\begin{equation}
pr^{\alpha,2}X(\nabla)|_{\nabla=0}=0.
\end{equation}
Also, the invariance condition gives 
\begin{equation}\label{Invcon}
\zeta (t,x,u)|_{t=0}=0,
\end{equation}
and the $\alpha^{th}$ extended infinitesimal related to RL fractional time derivative with Eq.(\ref{Invcon}) is given by
\begin{eqnarray} \label{extendinfinite}
\eta^0_\alpha=\frac{\partial^\alpha \eta}{\partial t^\alpha}+(\eta_u-\alpha D_t(\zeta))\frac{\partial^\alpha u}{\partial t^\alpha}+\mu -\sum^\infty_{n=1}{\alpha \choose n}D^n_t(\xi)D^{\alpha- n}_t(u_x)\nonumber\\
+\sum^\infty_{n=1}\Bigg[{\alpha \choose n}\frac{\partial^\alpha \eta_u}{\partial t^\alpha}-{\alpha \choose n+1}D^{n+1}_t(\zeta)\Bigg]D^{\alpha-n}_t(u),
\end{eqnarray}
where
\begin{eqnarray} \label{m}
\mu&=&\sum^\infty_{n=2}\sum^{n}_{m=2}\sum^{m}_{k=2}\sum^{k-1}_{r=0}{\alpha \choose n}{n \choose m}{k \choose r}\frac{1}{k!}\frac{t^{n-\alpha}}{\Gamma(n+1-\alpha)}[-u]^r \nonumber \\
&&\frac{\partial^m}{\partial t^m}[u^{k-r}]\frac{\partial^{n-m+k}\eta}{\partial t^{n-m}\partial u^k}.
\end{eqnarray}
 Due to the presence of $\frac{\partial \eta^k}{\partial u^k}$, if the infinitesimal $\eta$ is linear in u, the expression for  $\mu=0$ for $ k\geq 2$ in Eq.(\ref{m}).
\subsection{Definition}
The function $u=\theta (x,t)$ is an invariant solution of Eq.(\ref{TFPDE}) associated with Eq.(\ref{totop}) such that 
\begin{enumerate}
\item $u=\theta(x,t)$ satisfies Eq.(\ref{TFPDE}).
\item $u=\theta(x,t)$ is an invariant surface of Eq.(\ref{oneparameter}), this means that
\begin{center}
$\zeta(t,x,\theta)\theta_t+\xi(t,x,\theta)\theta_x=\eta(t,x,\theta)$.
\end{center}
\end{enumerate}

\section{Lie symmetries for time fractional generalized Burgers-Fisher equation }
\hspace{\parindent} In this work, Lie symmetry method has been presented for time fractional generalized Burgers-Fisher (GBF) equation.
 
    The mathematical modelling of turbulence was explained by a Dutch physicist, Johannus Martinus Burgers, in 1948.  A nonlinear equation which is the combination of reaction, convection and diffusion mechanism is called Burgers-Fisher equation.

The  GBF equation is used in the field of fluid dynamics. It has also been found in some applications such as gas dynamics, heat conduction, elasticity and so on.

The time fractional GBF equation is given by, 
\begin{equation} \label{GBF}
u^\alpha _t+\beta u^\delta u_x-u_{xx}=\gamma u(1-u^\delta) 
\end{equation}
where $0<\alpha \leq 1$,  $\alpha$ is the order of fractional time derivative and $\beta,\gamma,\delta$ are arbitrary constants.

Let us consider Eq.(\ref{GBF}) is invariant with respect to Eq.(\ref{oneparameter}), we have that
\begin{equation}\label{tildeGBF}
\tilde{u}^\alpha_{\tilde{t}} + \beta \tilde{u}^\delta \tilde{u}_{\tilde{x}}-\tilde{u}_{\tilde{x} \tilde{x}}=\gamma \tilde{u}(1-\tilde{u}^\delta),
\end{equation}
such that $\tilde{u}=u(\tilde{x},\tilde{t})$ satisfies Eq.(\ref{GBF}). Using  Eq.(\ref{oneparameter}) in Eq.(\ref{tildeGBF}), we get the invariant equation
\begin{equation} \label{invariantequation}
\eta^0_\alpha +\beta\delta u^{\delta-1}\eta u_x+\beta u^\delta \eta^x-\eta^{xx}-\gamma\eta +\gamma(\delta+1)u^\delta\eta=0.
\end{equation}
Applying the values of $\eta^0_\alpha,\eta^x$ and $\eta^{xx}$ given in Eq.(\ref{etavalues}) and Eq.(\ref{extendinfinite}) into Eq.(\ref{invariantequation}) and then isolating coefficients in partial derivatives with respect to $x$ and power of $u$, we get
\begin{eqnarray}
\partial^\alpha_t \eta - u\partial^\alpha_t\eta_u+\eta\gamma+\eta_{xx}-\gamma(\delta+1)u^\delta-\beta u^\delta\eta_x = 0, \\ \nonumber
{\alpha \choose n}\partial^n_t(\eta)-{\alpha \choose n+1}D^{n+1}_t(\zeta)=0, n=1,2,...\\ \nonumber
\xi_u=\zeta_u=\zeta_t=\zeta_x= 0,\\ \nonumber
\eta_{uu}=\zeta_{uu}=\xi_{uu}=0. \\ \nonumber
\end{eqnarray}
Solving the over determining equations, we get:
\begin{center}
$\zeta=k_1+x\alpha k_2,\;  \xi=2t\delta k_2, \; \eta = -\alpha u k_2$,
\end{center}
where $k_1$ and $k_2$ are arbitrary constants. Thus infinitesimal symmetry group for Eq.(\ref{GBF}) is spanned by the two vector fields
\begin{equation}
X_1=\frac{\partial}{\partial x},\; X_2=2t\frac{\partial}{\partial t}+x\alpha \delta \frac{\partial}{\partial x}-u\alpha \frac{\partial}{\partial u}.
\end{equation}
In particular, the symmetry $X_2$ possess the similarity transformation and similarity variable as follows:
\begin{equation}
k_1=x t^{\frac{-\alpha}{2}}, \; k_2=u t^{\frac{\alpha}{2\delta}},
\end{equation}
and this yields
\begin{equation}\label{symm}
u=t^{\frac{-\alpha}{2\delta}}f(\xi), \; \xi =x t^\frac{-\alpha}{2}. 
\end{equation}
In Eq.(\ref{symm}), $f$ is an arbitrary function of $\xi$. Using Eq.(\ref{symm}), Eq.(\ref{GBF}) is transformed to a special nonlinear ODE of fractional order as mentioned in the following theorem 1.

\begin{theorem}
The similarity transformation $u(x, t) = t^{\frac{-\alpha}{2\delta}}f(\xi)$  along with the similarity variable $\xi =x t^\frac{-\alpha}{2}$ reduces Eq.(\ref{GBF}) to the nonlinear fractional ODE of the form,  
\begin{equation} \label{reduced eqn}
\Bigg(P^{1-\frac{\alpha (1
+2\delta)}{2\delta},\; \alpha}_{\frac{2}{\alpha}} f \Bigg)(\xi)+\beta f^\delta f_\xi-u_{xx}-\gamma u(1-u^\delta)=0,
\end{equation}
where the Erdelyi-Kober fractional differential operator (EK-FDO)  \cite{kirya}
\begin{equation} \label{Pvalue}
\Big(P^{\zeta,\alpha}_\beta f \Big) (\xi) = \prod ^{n-1}_{j=0} \Big(\zeta+j-\frac{1}{\beta}\frac{d}{d\xi}\Big)(K^{\zeta+\alpha,n-\alpha}_\beta f)(\xi),
\end{equation}
\begin{equation}
 n=\left\{\begin{array}{ll}
[\alpha]+1, & \alpha \neq \mathbb{N},\\
\alpha , & \alpha \in \mathbb{N},
\end{array}\right.
\end{equation}
where
\begin{equation}\label{EKoperator}
(K^{\zeta,\alpha}_\beta f)(\xi)= \left\{ \begin{array}{ll}
\frac{1}{\Gamma (\alpha)}\int_ 1^\infty (u-1)^{\alpha -1}u^{-(\zeta+\alpha)}f(\xi u^{\frac{1}{\beta}})du,& \alpha >0,\\
f(\xi), & \alpha =0,
\end{array}\right.
\end{equation}
is the Erdelyi-Kober fractional integral operator (EK-FIO) \cite{baakkiyaraj}.
\end{theorem}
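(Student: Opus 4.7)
The plan is to substitute the ansatz $u(x,t) = t^{-\alpha/(2\delta)} f(\xi)$ with $\xi = x\, t^{-\alpha/2}$ directly into Eq.(\ref{GBF}) and reduce every term to an expression in $f(\xi)$. The spatial derivatives $u_x,u_{xx}$ and the pointwise nonlinearity $\gamma u(1-u^\delta)$ follow routinely by the chain rule, so the genuine work is casting the RL fractional time derivative $\partial_t^\alpha u$ into the Erdelyi--Kober form $P^{\zeta,\alpha}_{2/\alpha} f$.

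I would first rewrite the fractional time derivative as an EK fractional integral. Taking $n=[\alpha]+1$ and using the RL definition in Eq.(\ref{RLPDE}),
\begin{equation*}
\partial_t^\alpha u = \frac{1}{\Gamma(n-\alpha)}\, \frac{\partial^n}{\partial t^n} \int_0^t (t-s)^{n-\alpha-1}\, s^{-\alpha/(2\delta)}\, f\bigl(x\, s^{-\alpha/2}\bigr)\, ds.
\end{equation*}
The substitution $v = t/s$ yields $s = t/v$, $ds = -(t/v^{2})\,dv$, and $x\, s^{-\alpha/2} = \xi\, v^{\alpha/2}$; after collecting powers the integral becomes $t^{\,n-\alpha-\alpha/(2\delta)} \bigl(K^{\,\zeta+\alpha,\, n-\alpha}_{2/\alpha} f\bigr)(\xi)$ for an appropriate index $\zeta$, where $\beta = 2/\alpha$ arises as the reciprocal of the exponent $\alpha/2$ appearing in $v^{\alpha/2}$, matching Eq.(\ref{EKoperator}).

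Next I would apply $\partial^{n}/\partial t^{n}$. At fixed $x$ one has $t\,\partial_t \xi = -(\alpha/2)\xi$, so the operator identity $t\,(d/dt)\bigl[t^{p} g(\xi)\bigr] = t^{p}\bigl(p - (\alpha/2)\,\xi\, d/d\xi\bigr) g(\xi)$ lets me iterate $n$ times and factor $d^{n}/dt^{n}$ as $t^{-n}$ multiplied by the product $\prod_{j=0}^{n-1}\bigl(\zeta+j - \tfrac{\alpha}{2}\xi\, d/d\xi\bigr)$. Matching the base exponent against the factor $t^{\,n-\alpha-\alpha/(2\delta)}$ from the previous step forces $\zeta = 1 - \alpha(1+2\delta)/(2\delta)$, and the resulting product coincides with $P^{\zeta,\alpha}_{2/\alpha}$ in Eq.(\ref{Pvalue}). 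Finally, the chain rule gives $u_x = t^{-\alpha/(2\delta)-\alpha/2} f_\xi$, $u_{xx} = t^{-\alpha/(2\delta)-\alpha} f_{\xi\xi}$, $\beta u^\delta u_x = \beta t^{-\alpha/(2\delta)-\alpha} f^\delta f_\xi$ and similarly for the source; multiplying Eq.(\ref{GBF}) through by the common prefactor $t^{\alpha/(2\delta)+\alpha}$ cancels every explicit $t$ and leaves Eq.(\ref{reduced eqn}).

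The main obstacle is the bookkeeping in the fractional-derivative reduction. One must carefully collect every power of $t$ contributed by the ansatz exponent $-\alpha/(2\delta)$, the kernel $(t-s)^{n-\alpha-1}$ and the Jacobian of $v = t/s$, and then verify that the $n$ factors produced by iterated $t$-differentiation reassemble exactly into the product in Eq.(\ref{Pvalue}) with the specific indices $\zeta = 1-\alpha(1+2\delta)/(2\delta)$ and $\beta = 2/\alpha$, rather than with $\zeta$ shifted by an integer or $\beta$ replaced by $\alpha/2$; this index-matching is where the transformation really earns its Erdelyi--Kober reading.
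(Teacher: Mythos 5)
Your proposal follows essentially the same route as the paper's proof: rewrite $\partial_t^\alpha u$ via the RL definition, substitute $v=t/s$ to expose the Erdelyi--Kober integral operator $K^{1-\frac{\alpha}{2\delta},\,n-\alpha}_{2/\alpha}$ with the prefactor $t^{\,n-\frac{\alpha(1+2\delta)}{2\delta}}$, iterate the identity $t\,\partial_t\varphi(\xi)=-\frac{\alpha}{2}\xi\,\partial_\xi\varphi(\xi)$ $n$ times to assemble the operator $P^{1-\frac{\alpha(1+2\delta)}{2\delta},\,\alpha}_{2/\alpha}$, and handle the remaining terms by the chain rule before cancelling the overall power of $t$. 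The index matching $\zeta=1-\frac{\alpha(1+2\delta)}{2\delta}$, $\beta=2/\alpha$ is exactly as in the paper, so the proposal is correct at the same level of rigor as the published argument.
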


\begin{proof}
 Let $n-1<\alpha<n, \; n=1,2,3,...$ Then the RL fractional derivative for the similarity transformation of Eq.(\ref{symm}) becomes
\begin{equation} \label{RNODE}
\frac{\partial^\alpha u}{\partial t^\alpha}=\frac{\partial^n}{\partial t^n}\Bigg[\frac{1}{\Gamma(n-\alpha)}\int_0^t (t-s)^{n-\alpha-1}s^{-\frac{\alpha}{2\delta}} f(xs^{-(\frac{\alpha}{2})})ds\Bigg].
\end{equation}
Let $v=\frac{t}{s},  ds=-\frac{t}{v^2}dv.$ Thus, Eq.(\ref{RNODE}) becomes
\begin{equation} \label{aplly EK}
\frac{\partial^\alpha u}{\partial t^\alpha}=\frac{\partial^n}{\partial t^n}\Bigg[t^{n-\frac{\alpha(1+2\delta)}{2\delta}}\frac{1}{\Gamma(n-\alpha)}\int_1^{\infty}(v-1)^{n-\alpha-1}v^{-(n+1-\frac{\alpha(1+2\delta)}{2\delta})}f(\xi v^{\frac{\alpha}{2}})dv \Bigg].
\end{equation}
Applying EK-FDO Eq.(\ref{EKoperator}) in Eq.(\ref{aplly EK}), we have
\begin{equation}\label{newEK}
\frac{\partial^\alpha u}{\partial t^\alpha}=\frac{\partial^n}{\partial t^n}\Big[t^{n-\frac{\alpha(1+2\delta)}{2\delta}}\Big(K_\frac{2}{\alpha}^{1-\frac{\alpha}{2\delta},n-\alpha}f\Big)(\xi)\Big].
\end{equation}
In order to  simplify the right hand side of Eq.(\ref{newEK}), let us consider the relation 
 $\xi=xt^{-\frac{\alpha}{2}}, \varphi\in (0,\infty)$, we acquire
\begin{equation}
t\frac{\partial}{\partial t}\varphi(\xi)=tx(-\frac{\alpha}{2})t^{-\frac{\alpha}{2}-1}\varphi{\prime}(\xi)=-{\frac{\alpha}{2}}\xi\frac{\partial}{\partial\xi}\varphi(\xi).
\end{equation}
Hence,
\begin{eqnarray}
&&\frac{\partial^n}{\partial t^n}\Big[t^{n-\frac{\alpha(1+2\delta)}{2\delta}}\Big(K_\frac{2}{\alpha}^{1-\frac{\alpha}{2\delta},n-\alpha}f\Big)(\xi)\Big] \nonumber \\&=&\frac{\partial ^{n-1}}{\partial t^{n-1}}\Bigg[\frac{\partial}{\partial t}\Big(t^{n-\frac{\alpha(1+2\delta)}{2\delta}}\Big(K_{\frac{2} {\delta}}^{1-{\frac{\alpha}{2\delta}},{n-\alpha}}f \Big)(\xi)\Big)\Bigg] \nonumber\\
&=& \frac{\partial ^{n-1}}{\partial t^{n-1}}\Bigg[t^{n-\frac{\alpha(1+2\delta)}{2\delta}-1}\Big(n-{\frac{\alpha}{2\delta}}+\alpha-{\frac{\alpha}{2}}\xi {\frac{\partial}{\partial \xi}} \Big)\Big(K_{\frac{2} {\delta}}^{1-{\frac{\alpha}{2\delta}},{n-\alpha}}f \Big)(\xi)\Big) \Bigg].
\end{eqnarray}
Processing repeatedly for $n-1$ times, we get
\begin{eqnarray} \label{repeating}
&&\frac{\partial^n}{\partial t^n}\Big[t^{n-\frac{\alpha(1+2\delta)}{2\delta}}\Big(K_\frac{2}{\alpha}^{1-\frac{\alpha}{2\delta},n-\alpha}f\Big)(\xi)\Big]\nonumber\\
&=&t^{-\frac{\alpha(1+2\delta)}{2\delta}}\Bigg[\Big(1-\frac{\alpha(1+2\delta)}{2\delta}+j-{\frac{\alpha}{2}}\xi {\frac{\partial}{\partial \xi}}\Big(K_{\frac{2} {\delta}}^{1-{\frac{\alpha}{2\delta}},{n-\alpha}}f \Big)(\xi) \Bigg].
\end{eqnarray}
Applying EK-FDO Eq.(\ref{Pvalue}) in Eq (\ref{repeating}), we have
\begin{equation} \label{nth EK}
\frac{\partial^n u}{\partial t^n}\Big[t^{n-\frac{\alpha(1+2\delta)}{2\delta}}\Big(K_\frac{2}{\alpha}^{1-\frac{\alpha}{2\delta},n-\alpha}f\Big)(\xi)\Big]=t^{-{\frac{\alpha(1+2\delta)}{2\delta}}}\Bigg(P_{\frac{2}{\alpha}}^ {1-\frac{\alpha (1+2\delta)}{2\delta},\alpha}f \Bigg)(\xi).
\end{equation}
Using Eq.(\ref{nth EK}) into Eq.(\ref{newEK}), we obtain
\begin{equation}
\frac{\partial^n u}{\partial t^n}=t^{-{\frac{\alpha(1+2\delta)}{2\delta}}}\Bigg(P_{\frac{2}{\alpha}}^ {1-\frac{\alpha (1+2\delta)}{2\delta},\alpha}f \Bigg)(\xi).
\end{equation}
Thus Eq.(\ref{GBF}) becomes
\begin{equation}\label{frac order ODE}
\Bigg(P_{\frac{2}{\alpha}}^ {1-\frac{\alpha (1+2\delta)}{2\delta},\alpha}f \Bigg)(\xi)+\beta f^\delta f_\xi-f_{\xi \xi}-\gamma f(1-f^\delta)=0.
\end{equation}
\end{proof}

\section{Explicit power series solutions}
In this section, we obtain the power series method for the resultant equation Eq.(\ref{frac order ODE}) which has the arbitrary nonlinearity $\delta$. Furthermore, we analyze convergence of the exact power series solution which is differentiable. \\
Let us assume that
\begin{eqnarray}\label{liepower1}
 f(\xi)&=&\sum_{n=0}^{\infty}b_n\xi ^n,\\
    f^{\prime}(\xi)&=&\sum_{n=0}^{\infty}(n+1)b_{n+1} \xi^{n},\label{liepower2}\\ f^{\prime\prime}(\xi)&=&\sum_{n=0}^{\infty}(n+2)(n+1)b_{n+2} \xi^{n}.\label{liepower3}
\end{eqnarray}
Substituting Eqs.(\ref{liepower1}),(\ref{liepower2}) and (\ref{liepower3}) into Eq.(\ref{frac order ODE}), we get
\begin{eqnarray}
&&\sum_{n=0}^{\infty}\frac{\Gamma(2-\frac{\alpha(1+2\delta)}{2\delta}+\frac{n\alpha}{2})}{\Gamma(2-\frac{\alpha(1+2\delta)}{2\delta}+\frac{n\alpha}{2}+\alpha)}b_n\xi^n+\beta \Big(\sum_{n=0}^{\infty}b_n\xi^n \Big)^{\delta} \sum_{n=0}^{\infty}(n+1)b_{n+1}\xi^n \nonumber \\
&& -\sum_{n=0}^{\infty} (n+2)(n+1)b_{n+2}\xi^n-\gamma\sum_{n=0}^{\infty}b_n\xi^n +\gamma \Big(\sum_{n=0}^{\infty}b_n \xi ^n \Big)^{\delta+1}=0.
\end{eqnarray}
Expanding  $\delta$ times, we obtain

\begin{eqnarray} \label{deltatimesexpansion}
&& \sum_{n=0}^{\infty}\frac{\Gamma(2-\frac{\alpha(1+2\delta)}{2\delta}+\frac{n\alpha}{2})}{\Gamma(2-\frac{\alpha(1+2\delta)}{2\delta}+\frac{n\alpha}{2}+\alpha)}b_n\xi^n+\beta \Big(\sum_{n=0}^{\infty}\sum_{k_1=0}^{n}\sum_{k_2=0}^{k_1}...\sum_{k_{\delta-1}=0}^{k_{\delta-2}}\nonumber\\
&&\sum_{k_\delta=0}^{k_{\delta-1}}{b_{k_\delta}} {b_{{k_{\delta-1}}-{k_{\delta}}}}  
 {b_{{k_{\delta-2}}-{k_{\delta-1}}}}...{b_{{k_{2}}-{k_{3}}}}{b_{{k_{1}}-{k_{2}}}} (n-k_1+1){b_{(n-k_1+1)}} \Big) {\xi^n} \nonumber\\ 
&& -\sum_{n=0}^{\infty} (n+2)(n+1)b_{n+2}\xi^n-\gamma\sum_{n=0}^{\infty}b_n\xi^n +\gamma \Big(\sum_{n=0}^{\infty}\sum_{k_1=0}^{n}\sum_{k_2=0}^{k_1}...\sum_{k_{\delta-1}=0}^{k_{\delta-2}}\nonumber  \\  
&&\sum_{k_\delta=0}^{k_{\delta-1}}{b_{k_\delta}} 
{b_{{k_{\delta-1}}-{k_{\delta}}}} {b_{{k_{\delta-2}}-{k_{\delta-1}}}}...{b_{{k_{2}}-{k_{3}}}}{b_{{k_{1}}-{k_{2}}}} {b_{n-k_1}} \Big)\xi^n= 0.
\end{eqnarray}
Comparing coefficients in Eq.(\ref{deltatimesexpansion}), when $n=0$, we have
\begin{equation}
    b_2=\frac{1}{2}\Bigg[ \frac{\Gamma(2-\frac{\alpha(1+2\delta)}{2\delta})}{\Gamma(2-\frac{\alpha(1+2\delta)}{2\delta}+\alpha)}b_0 + 
    \beta b_1 b_0^{\delta}-\gamma b_0 + \gamma b_0^{\delta+1}\Bigg].
\end{equation}
When $n\geq1$, we have the recurrence relations among the coefficients become 
\begin{eqnarray} \label{bnplustwo} b_{n+2}&=&\frac{1}{(n+1)(n+2)}\Bigg[\sum_{n=0}^{\infty}\frac{\Gamma(2-\frac{\alpha(1+2\delta)}{2\delta}+\frac{n\alpha}{2})}{\Gamma(2-\frac{\alpha(1+2\delta)}{2\delta}+\frac{n\alpha}{2}+\alpha)}b_n \nonumber \\
&&+\beta \Big(\sum_{k_1=0}^{n}\sum_{k_2=0}^{k_1}... 
\sum_{k_{\delta-1}=0}^{k_{\delta-2}}\sum_{k_\delta=0}^{k_{\delta-1}}{b_{k_\delta}} {b_{{k_{\delta-1}}-{k_{\delta}}}}  
{b_{{k_{\delta-2}}-{k_{\delta-1}}}}...{b_{{k_{2}}-{k_{3}}}}\nonumber\\
&&{b_{{k_{1}}-{k_{2}}}} (n-k_1+1){b_{(n-k_1+1)}} \Big) -\gamma b_n +\gamma \Big(\sum_{k_1=0}^{n}\sum_{k_2=0}^{k_1}...\sum_{k_{\delta-1}=0}^{k_{\delta-2}} \nonumber \\
&&\sum_{k_\delta=0}^{k_{\delta-1}}{b_{k_\delta}} {b_{{k_{\delta-1}}-{k_{\delta}}}} 
 {b_{{k_{\delta-2}}-{k_{\delta-1}}}}...{b_{{k_{2}}-{k_{3}}}}{b_{{k_{1}}-{k_{2}}}} {b_{n-k_1}} \Big) \Bigg].
\end{eqnarray}
The power series solution of Eq.(\ref{frac order ODE}) can be written in the form:
\begin{eqnarray}
  f(\xi)&=& b_0+b_1 \xi+ \sum_{n=0}^{\infty}b_{n+2} \xi^{n+2} \nonumber \\ 
 &=& b_0+b_1 \xi  \nonumber\\
&& +\sum_{n=0}^{\infty}\frac{1}{(n+1)(n+2)}\Bigg[\sum_{n=0}^{\infty}\frac{\Gamma(2-\frac{\alpha(1+2\delta)}{2\delta}+\frac{n\alpha}{2})}{\Gamma(2-\frac{\alpha(1+2\delta)}{2\delta}+\frac{n\alpha}{2}+\alpha)}b_n \nonumber \\
&& +\beta \Big(\sum_{k_1=0}^{n}\sum_{k_2=0}^{k_1}...\sum_{k_{\delta-1}=0}^{k_{\delta-2}}\sum_{k_\delta=0}^{k_{\delta-1}}{b_{k_\delta}} {b_{{k_{\delta-1}}-{k_{\delta}}}} {b_{{k_{\delta-2}}-{k_{\delta-1}}}}...{b_{{k_{2}}-{k_{3}}}}\nonumber\\
&& {b_{{k_{1}}-{k_{2}}}} (n-k_1+1){b_{(n-k_1+1)}} \Big) -\gamma b_n +\gamma \Big(\sum_{k_1=0}^{n}\sum_{k_2=0}^{k_1}...\sum_{k_{\delta-1}=0}^{k_{\delta-2}}\nonumber \\
&&\sum_{k_\delta=0}^{k_{\delta-1}}{b_{k_\delta}} {b_{{k_{\delta-1}}-{k_{\delta}}}} 
{b_{{k_{\delta-2}}-{k_{\delta-1}}}}...{b_{{k_{2}}-{k_{3}}}}{b_{{k_{1}}-{k_{2}}}} {b_{n-k_1}} \Big) \Bigg] \xi^{n+2}.
\end{eqnarray}
 Consequently, we acquire exact power series solution of Eq.(\ref{GBF}) as follows:
  \begin{eqnarray} \label{explicit power solution}
  u(x,t)&=&b_0 t^{-\frac{\alpha}{2}}+b_1 x t^{-\alpha} \nonumber\\
  &&+\sum_{n=0}^{\infty}\frac{1}{(n+1)(n+2)}\Bigg[\sum_{n=0}^{\infty}\frac{\Gamma(2-\frac{\alpha(1+2\delta)}{2\delta}+\frac{n\alpha}{2})}{\Gamma(2-\frac{\alpha(1+2\delta)}{2\delta}+\frac{n\alpha}{2}+\alpha)}b_n \nonumber \\
  && +\beta \Big(\sum_{k_1=0}^{n}\sum_{k_2=0}^{k_1}...\sum_{k_{\delta-1}=0}^{k_{\delta-2}}\sum_{k_\delta=0}^{k_{\delta-1}}{b_{k_\delta}} {b_{{k_{\delta-1}}-{k_{\delta}}}} {b_{{k_{\delta-2}}-{k_{\delta-1}}}}...{b_{{k_{2}}-{k_{3}}}}{b_{{k_{1}}-{k_{2}}}} 
  \nonumber \\
&& (n-k_1+1){b_{(n-k_1+1)}} \Big) -\gamma b_n +\gamma \Big(\sum_{k_1=0}^{n}\sum_{k_2=0}^{k_1}...\sum_{k_{\delta-1}=0}^{k_{\delta-2}}\sum_{k_\delta=0}^{k_{\delta-1}}{b_{k_\delta}} 
\nonumber \\
&&{b_{{k_{\delta-1}}-{k_{\delta}}}} {b_{{k_{\delta-2}}-{k_{\delta-1}}}}...{b_{{k_{2}}-{k_{3}}}}{b_{{k_{1}}-{k_{2}}}} {b_{n-k_1}} \Big) \Bigg]x^{n+2}  t^{-\frac{\alpha(n+3)}{2}}.
 \end{eqnarray}

\section{Convergence analysis}
In this section, convergence of the power series  solution of  Eq.(\ref{explicit power solution}) will be presented. Consider Eq.(\ref{bnplustwo}), we can write
\begin{eqnarray} \label{convergebn}
    |{b_{n+2}}| &\leq&  \Bigg \{ \frac{|\Gamma(2-\frac{\alpha(1+2\delta)}{2\delta}+\frac{n\alpha}{2})|}{|\Gamma(2-\frac{\alpha(1+2\delta)}{2\delta}+\frac{n\alpha}{2}+\alpha)|}|b_n| +|\beta| \Big(\sum_{k_1=0}^{n}\sum_{k_2=0}^{k_1}...\sum_{k_{\delta-1}=0}^{k_{\delta-2}}\sum_{k_\delta=0}^{k_{\delta-1}}{|b_{k_\delta}|} \nonumber \\
&&{|b_{{k_{\delta-1}}-{k_{\delta}}}|} {|b_{{k_{\delta-2}}-{k_{\delta-1}}}|}...{|b_{{k_{2}}-{k_{3}}}|}{|b_{{k_{1}}-{k_{2}}}|} {|b_{(n-k_1+1)}|} \Big) -|\gamma| |b_n| \nonumber \\
&&+|\gamma| \Big(\sum_{k_1=0}^{n}\sum_{k_2=0}^{k_1}...\sum_{k_{\delta-1}=0}^{k_{\delta-2}}\sum_{k_\delta=0}^{k_{\delta-1}}{|b_{k_\delta}|} {|b_{{k_{\delta-1}}-{k_{\delta}}}|} 
\nonumber \\
&&{|b_{{k_{\delta-2}}-{k_{\delta-1}}}|}...{|b_{{k_{2}}-{k_{3}}}}{b_{{k_{1}}-{k_{2}}}|} {|b_{n-k_1}|} \Big) \Bigg\}.
\end{eqnarray}
 It is well known from  the properties of $\Gamma$, it is easily found that \\
$\frac{|\Gamma(2-\frac{\alpha(1+2\delta)}{2\delta}+\frac{n\alpha}{2})|}{|\Gamma(2-\frac{\alpha(1+2\delta)}{2\delta}+\frac{n\alpha}{2}+\alpha)|} < 1$ for arbitrary $n$.\\
Thus Eq.(\ref{convergebn}) can be written as 
\begin{eqnarray}
    {|b_{n+2}|} &\leq& M \Bigg \{ |b_n| + \Big(\sum_{k_1=0}^{n}\sum_{k_2=0}^{k_1}...\sum_{k_{\delta-1}=0}^{k_{\delta-2}}\sum_{k_\delta=0}^{k_{\delta-1}}{|b_{k_\delta}|} {|b_{{k_{\delta-1}}-{k_{\delta}}}|} \nonumber \\
&&{|b_{{k_{\delta-2}}-{k_{\delta-1}}}|}...{|b_{{k_{2}}-{k_{3}}}|} {|b_{{k_{1}}-{k_{2}}}|} {|b_{(n-k_1+1)}|} \Big) \nonumber\\
&&+ \Big(\sum_{k_1=0}^{n}\sum_{k_2=0}^{k_1}...\sum_{k_{\delta-1}=0}^{k_{\delta-2}}\sum_{k_\delta=0}^{k_{\delta-1}}{|b_{k_\delta}|} {|b_{{k_{\delta-1}}-{k_{\delta}}}|} 
\nonumber \\
&&{|b_{{k_{\delta-2}}-{k_{\delta-1}}}|}...{|b_{{k_{2}}-{k_{3}}}}{b_{{k_{1}}-{k_{2}}}|} {|b_{n-k_1}|} \Big) \Bigg\}
\end{eqnarray}
where $ M= max\{|(1-|\gamma|)|, |\gamma|,|\beta|\}$.\\
Consider another power series of the form 
\begin{equation}
    G(\xi)=\sum_{n=0}^{\infty}q_n \xi^n.
\end{equation}
Let $q_i=|b_i|, i=0,1,2.$ Then we have
\begin{eqnarray}
   q_{n+2} &\leq& M \Bigg \{ q_n + \Big(\sum_{k_1=0}^{n}\sum_{k_2=0}^{k_1}...\sum_{k_{\delta-1}=0}^{k_{\delta-2}}\sum_{k_\delta=0}^{k_{\delta-1}}{q_{k_\delta}} {q_{{k_{\delta-1}}-{k_{\delta}}}}  \nonumber \\
  &&{q_{{k_{\delta-2}}-{k_{\delta-1}}}}...{q_{{k_{2}}-{k_{3}}}}{q_{{k_{1}}-{k_{2}}}} {q_{(n-k_1+1)}} \Big) \nonumber\\
&& + \Big(\sum_{k_1=0}^{n}\sum_{k_2=0}^{k_1}...\sum_{k_{\delta-1}=0}^{k_{\delta-2}}\sum_{k_\delta=0}^{k_{\delta-1}}{q_{k_\delta}} {q_{{k_{\delta-1}}-{k_{\delta}}}} 
\nonumber \\
&&{q_{{k_{\delta-2}}-{k_{\delta-1}}}}...{q_{{k_{2}}-{k_{3}}}}{q_{{k_{1}}-{k_{2}}}} {q_{n-k_1}} \Big) \Bigg\}.
\end{eqnarray}
Therefore it is easily seen that $|q_n|\leq b_n, n=0,1,2,...$\\
On the other hand, the series $ G(\xi)=\sum_{n=0}^{\infty}q_n \xi^n$ is majorant series of Eq.(\ref{convergebn}). We next show that the series  $ G(\xi)$ has positive radius of convergence. By simple calculation, we have that 
\begin{eqnarray}
    G(\xi)&=& q_0+q_1 \xi+M\Bigg \{ \sum_{n=0}^{\infty} q_n + \Big(\sum_{n=0}^{\infty}\sum_{k_1=0}^{n}\sum_{k_2=0}^{k_1}...\sum_{k_{\delta-1}=0}^{k_{\delta-2}}\sum_{k_\delta=0}^{k_{\delta-1}}{q_{k_\delta}} {q_{{k_{\delta-1}}-{k_{\delta}}}} \nonumber \\
    && {q_{{k_{\delta-2}}-{k_{\delta-1}}}}...{q_{{k_{2}}-{k_{3}}}}{q_{{k_{1}}-{k_{2}}}} {q_{(n-k_1+1)}} \Big) 
     + \Big(\sum_{n=0}^{\infty}\sum_{k_1=0}^{n}\sum_{k_2=0}^{k_1}...\sum_{k_{\delta-1}=0}^{k_{\delta-2}}  \nonumber\\
    &&\sum_{k_\delta=0}^{k_{\delta-1}}{q_{k_\delta}} {q_{{k_{\delta-1}}-{k_{\delta}}}}{q_{{k_{\delta-2}}-{k_{\delta-1}}}}...{q_{{k_{2}}-{k_{3}}}}{q_{{k_{1}}-{k_{2}}}} {q_{n-k_1}} \Big) \Bigg\} \xi^{n+2}.
\end{eqnarray}

Consider an implicit functional system with respect to the independent variable $\xi$ as follows:
\begin{equation}
    \textbf{G}(\xi,G)=G-q_0-q_1\xi-M \{\xi^2 G+\xi G^\delta (G-q_0)+\xi^2 G^{\delta+1}\}
\end{equation}
Since $\textbf{G}$ is analytic in a neighbourhood of $(0,q_0)$ where $\textbf{G}(0,q_0)=0$ and $\frac{\partial}{\partial G}\textbf{G}(0,q_0)\neq 0.$ Then by implicit function theorem \cite{rudin}, one can see that the series  $ G(\xi)=\sum_{n=0}^{\infty}q_n \xi^n$ is analytic in neighbourhood of the point $(0,q_0)=0$ and with a positive radius. This implies that Eq.(\ref{explicit power solution}) converges in a neighbourhood of the point $(0,q_0)=0$.
\section{Conclusion}
In this paper, to generate all symmetries, we applied Lie group method for the time fractional generalized Burgers-Fisher equation using RL fractional derivative.  Using symmetries, we reduced the time fractional generalized Burgers-Fisher equation to the nonlinear fractional ODE. Then the power series has been obtained to get an explicit solution for the nonlinear fractional ODE of the generalized Burgers-Fisher equation with arbitrary nonlinearity and analyzed the convergence also. To illustrate the physical meaning of the exact solution, some plots are given with suitable parameters values. 

\begin{acknowledgements}
The authors thank the Department of Science and Technology-Fund Improvement of S\&T Infrastructure in Universities and Higher Educational Institutions Government of India (SR/FST/MSI-107/2015) for carrying out this research work.
\end{acknowledgements}




\end{document}